\theoremstyle{plain}
\newtheorem{theorem}{Theorem}
\begin{document}
  \title{Discovering Communication Pattern Shifts in Large-Scale Labeled Networks using Encoder Embedding and Vertex Dynamics}
  \author{Cencheng Shen, Jonathan Larson, Ha Trinh, Xihan Qin, Youngser Park, Carey E. Priebe
  \thanks{
  \IEEEcompsocthanksitem Cencheng Shen is with the Department of Applied Economics and Statistics, University of Delaware. E-mail: shenc@udel.edu \protect
  \IEEEcompsocthanksitem Jonathan Larson and Ha Trinh
are with Microsoft Research at Redmond WA. E-mail: jolarso@microsoft.com, trinhha@microsoft.com \protect
   \IEEEcompsocthanksitem Xihan Qin is with the Department of Computer and Information Sciences, University of Delaware. E-mail: xihan@udel.edu \protect
  \IEEEcompsocthanksitem Carey E.Priebe and Youngser Park
are with the Department of Applied Mathematics and Statistics (AMS), the Center for Imaging Science (CIS), and the Mathematical Institute for Data Science (MINDS), Johns Hopkins University. E-mail: cep@jhu.edu, youngser@jhu.edu \protect
}
\thanks{This work was supported in part by the National Science Foundation HDR TRIPODS 1934979, 
the National Science Foundation DMS-2113099, and by funding from Microsoft Research.
The authors thank Ms. Ningyuan Huang, Mr. Tianyi Chen, and the anonymous reviewers for valuable comments and suggestions. The anonymized version of the email data on Microsoft that support this study will be retained indefinitely for scientific and academic purposes. The data are available from the authors upon reasonable request and with the permission of Microsoft.
}}
\markboth{IEEE Transactions on Network Science and Engineering}%
{Shell \MakeLowercase{\textit{et al.}}: Bare Demo of IEEEtran.cls for Computer Society Journals}

\maketitle

\begin{abstract} 
Analyzing large-scale time-series network data, such as social media and email communications, poses a significant challenge in understanding social dynamics, detecting anomalies, and predicting trends. In particular, the scalability of graph analysis is a critical hurdle impeding progress in large-scale downstream inference. To address this challenge, we introduce a temporal encoder embedding method. This approach leverages ground-truth or estimated vertex labels, enabling an efficient embedding of large-scale graph data and the processing of billions of edges within minutes. Furthermore, this embedding unveils a temporal dynamic statistic capable of detecting communication pattern shifts across all levels, ranging from individual vertices to vertex communities and the overall graph structure. We provide theoretical support to confirm its soundness under random graph models, and demonstrate its numerical advantages in capturing evolving communities and identifying outliers. Finally, we showcase the practical application of our approach by analyzing an anonymized time-series communication network from a large organization spanning 2019-2020, enabling us to assess the impact of Covid-19 on workplace communication patterns.
\end{abstract}

\begin{IEEEkeywords}
Graph Embedding, Time-Series Networks, Outlier Detection
\end{IEEEkeywords}


\section{Introduction}

\IEEEPARstart{G}{raph} data is a unique form of data structure that captures relationships between entities in various real-world settings, including social networks, communication networks, webpage hyperlinks, and biological systems \cite{GirvanNewman2002, newman2003structure, barabasi2004network, boccaletti2006complex, VarchneyEtAl2011, ugander2011anatomy}. A graph of $n$ vertices and $s$ edges can be represented by either an $n \times n$ adjacency matrix $\mathbf{A}$ or an $s \times 3$ weighted edgelist $\mathbf{E}$, with the latter being preferred for its storage efficiency. Graph data contains valuable information that can be used for various types of analysis, such as community detection \cite{KarrerNewman2011, ZhaoLevinaZhu2012, Hric2014}, link prediction \cite{liben2003link, leskovec2010predicting}, node classification \cite{perozzi2014deepwalk, kipf2017semi}, and outlier detection \cite{Ranshous2015, akoglu2015graph}, among others. 

Graph embedding is a highly versatile and widely used approach for analyzing graph data. It maps the nodes of a graph into a low-dimensional space while preserving the structural information of the graph. Unlike methods designed for specific tasks, graph embedding produces a vertex representation in Euclidean space that facilitates most downstream inference tasks. For instance, spectral embedding can reliably estimate the latent position of vertices \cite{RoheEtAl2011, SussmanEtAl2012}, perform community detection and vertex classification \cite{TangSussmanPriebe2013, SussmanTangPriebe2014, Tandon2021}, and handle multiple-graph inference and time-series graph embedding \cite{chen2020multiple, arroyo2021inference, Avanti2022, Patrick2022}. Popular and influential machine learning techniques, such as graph convolutional neural networks \cite{kipf2017semi, Wu2019ACS} and node2vec \cite{grover2016node2vec, node2vec2021}, are also examples of graph embedding methods.


In the modern digital era, graph data has become increasingly complex, with larger numbers of vertices and edges, and frequently appears as time-series data with evolving edge connectivity and weights over time. In the context of a large corporate environment with a workforce comprising hundreds of thousands of employees, their digital communication patterns can fluctuate on a monthly basis attributed to factors like organizational restructuring or global events such as the Covid-19 pandemic. On a grander scale, consider the ever-evolving landscape of social media platforms such as Meta and Twitter. As of 2023, Meta boasts an impressive 3 billion user base, while Twitter has over 350 million users. These examples highlights the dynamic nature of contemporary network data and the increasing need to analyze them effectively.

However, existing methods often require considerable computational resources to process such data and may not capture the evolving nature of dynamic networks. On the other hand, recent breakthrough in image and language analysis \cite{krizhevsky2012imagenet,radford2019languagegpt2} have demonstrated the crucial role of scalability in performance gain. The sheer amount of data often encodes sufficient information, and the ability to process vast amounts of data within a reasonable time frame can yield performance gain far exceeding that of a complicated and computation-intensive method limited to a small or subsampled dataset.

In this paper, we introduce a new approach called temporal encoder embedding for fast and scalable analysis of time-series networks. Utilizing either ground-truth or estimated vertex labels, our approach extends the concept of one-hot encoder embedding \cite{GEE1} to handle dynamic network data, delivering several notable advantages over existing methods. Firstly, it exhibits remarkable scalability, capable of processing billions of edges within minutes, surpassing the computational capacity of current methods. Secondly, our approach is computational stable without requiring additional complexities such as explicit dimension choice, random-walk schemes, or graph alignment, which are often needed by other techniques \cite{grover2016node2vec, arroyo2021inference, Patrick2022}. Thirdly, our method is theoretically sound, preserving the graph structure under random graph models with sufficiently large graph size. Finally, the resulting embedding provides temporal dynamic statistics, enabling the detection of communication pattern changes at multiple levels, spanning from individual outlier vertices to evolving community structures and encompassing entire network anomalies.

The simulation study employs the degree-corrected stochastic block model \cite{SnijdersNowicki1997, KarrerNewman2011,ZhaoLevinaZhu2012} to confirm the effectiveness of our approach in capturing stable networks, identifying outliers, and detecting pattern shifts. Additionally, we apply our method to a large-scale monthly communication network spanning 2019-2020, showcasing its efficiency and utility in real data analysis. The temporal encoder embedding enables rapid visualization and change-point detection at all levels of the network data, revealing both expected and unexpected behaviors during the Covid-19 pandemic. Our findings highlight the method's potential to analyze and comprehend complex time-series network data. The appendix contains proofs of theorems and detailed simulation information. The MATLAB code for both the method and simulations is available on GitHub\footnote{\url{https://github.com/cshen6/GraphEmd}}.

\section{Method}


\subsection{Temporal Encoder Embedding}

The method takes $T$ edgelists as input, all sharing a common set of $n$ vertices. These vertices should be accompanied by a ground-truth or estimated label vector $\mathbf{Y}$ of $K$ communities. The case of partial or no label vector is discussed in Section~\ref{sec:label}.

\begin{itemize}
\item \textbf{Input}: The edgelists $\{\mathbf{E}_{t} \in \mathbb{R}^{s_{t} \times 3}, t=1,\ldots,T\}$ and a label vector $\mathbf{Y} \in \{1,\ldots,K\}^{n}$.
\item \textbf{Step 1}: For each community $k=1,\ldots,K$, compute the number of observations per-community 
\begin{align*}
n_k = \sum_{i=1}^{n} I(\mathbf{Y}_i=k).
\end{align*}
\item \textbf{Step 2}: For the given label vector $\mathbf{Y}$, compute its one-hot encoding matrix $\mathbf{W} \in \mathbb{R}^{n \times K}$. Then, for each $k=1,\ldots,K$, normalize each column of $\mathbf{W}$ via
\begin{align*}
\mathbf{W}(\mathbf{Y}=k, k)=\mathbf{W}(\mathbf{Y}=k, k)/n_k.
\end{align*}
If $n_k=0$, the $k$th column is set to $0$ instead.
\item \textbf{Step 3}: At each time step $t=1,\ldots, T$, compute 
\begin{align}
\label{eq1}
\mathbf{Z}_{t}=\mathbf{A}_{t} \times \mathbf{W},
\end{align}
where $\mathbf{A}_{t}$ represents the adjacency matrix of $\mathbf{E}_{t}$.
\item \textbf{Step 4}: Denote each row of $\mathbf{Z}_{t}$ by $\mathbf{Z}_{t}(i, \cdot)$, and normalize every non-zero row by the Euclidean norm. Namely, for each $i$ and each $t$ where $\|\mathbf{Z}_{t}(i, \cdot)\|_2 >0$, compute 
\begin{align*}
\mathbf{\tilde{Z}}_{t}(i, \cdot)=\frac{\mathbf{Z}_{t}(i, \cdot)}{\|\mathbf{Z}_{t}(i, \cdot)\|_2}.
\end{align*}
\item \textbf{Output}: The temporal encoder embedding $\{\mathbf{\tilde{Z}}_{t} \in \mathbb{R}^{n \times K}, t=1,\ldots,T\}$.
\end{itemize}
Each vertex in the network has an embedding represented by $\mathbf{\tilde{Z}}_{t}(i, \cdot)$, where the $k$-th dimension $\mathbf{\tilde{Z}}_{t}(i, k)$ corresponds to the average connectivity of vertex $i$ to community $k$. The embedding dimension is fixed at $K$, which is the number of communities in the data. 

The normalization step creates a weighted-averaged representation of edge connectivity. Unlike other methods, aligning embedding across time is unnecessary as our approach estimates the average connectivity directly and is non-random, which means that the exact same graph will always yield the exact same embedding. 

The name temporal encoder embedding reflects the fact that the method is tailored for time-series graphs and involves the use of one-hot encoding in step 2. An alternative perspective to view this approach is that it performs a deterministic graph convolution on the one-hot labels. Recent research has shown that incorporating label structure into graph learning can enhance the learning performance \cite{huang2021combining, wang2022combining}.

\subsection{Temporal Dynamic Statistics}
Without loss of generality, assume we take time point 1 as the reference time point, the temporal dynamic statistics can be computed by taking the inner product of the vertex embedding.
\begin{itemize}
\item \textbf{Step 5}: The vertex dynamic statistic for vertex $i$ at time $t$ is calculated by
\begin{align*}
\mbox{Dynamic}_{1,t}(i)=1-<\mathbf{\tilde{Z}}_{t}(i, \cdot), \mathbf{\tilde{Z}}_{1}(i, \cdot)>,
\end{align*}
where $<\cdot,\cdot>$ denotes the standard inner product, the first subscript denotes the reference time $1$, and the second subscript denotes the current time $t$.

The community dynamic statistic for community $k$ is the average vertex dynamic statistic within the community:
\begin{align*}
\mbox{Dynamic}_{1,t}^{k}= 1- \sum_{i=1, \mathbf{Y}_i=k}^{n} \frac{<\mathbf{\tilde{Z}}_{t}(i, \cdot), \mathbf{\tilde{Z}}_{1}(i, \cdot)>}{n_{k}}.
\end{align*}
The graph dynamic statistic at time $t$ is the mean dynamic of all vertices:
\begin{align*}
\mbox{Dynamic}_{1,t}= 1- \sum_{i=1}^{n} \frac{<\mathbf{\tilde{Z}}_{t}(i, \cdot), \mathbf{\tilde{Z}}_{1}(i, \cdot)>}{n}.
\end{align*}
\end{itemize}
The temporal dynamic statistics lie in $[0,1]$ for non-negative weighted graphs, with higher values indicating greater deviation from the reference time point. A dynamic statistic of $0$ indicates that the vertex / community / graph connectivity at time $t$ is identical to the reference time point. Note that the dynamic statistics can be defined with respect to any other reference time point, such as the previous week to capture recent weekly pattern change. Alternatively, the maximum vertex dynamic within a period of time may also be used, i.e., $\max_{t=1,2,\ldots,T} {\mbox{Dynamic}_{1,t}(i)}$.

\subsection{On Label Vector}
\label{sec:label}
An important aspect of the algorithm requires the availability of a label vector. Depending on its availability, special consideration and extra processing may be needed before applying the default algorithm.

Scenario 1 pertains to situations where labels change over time, such as when a vertex initially belongs to group 1 but is later reassigned to group 2 in a communication network. It is certainly possible to utilize individual label vectors at each time step, i.e., use $\mathbf{W}_t$ in Equation~\ref{eq1} via an updated label vector at $t$. However, we recommend to use the label vector at the reference time consistently across all graphs, rather than using individual / different label vectors at each time. 

This is because the default method yields an embedding that only changes as the edge connectivity changes, whereas a multi-label approach yields an embedding that is influenced by changes in both labels and communication patterns. For example, if community $1$ is later partitioned into community $1$ and $2$, but all vertices in both groups still have the same connectivity as before, a multi-label approach would consider the graph to have changed significantly due to the label change and introduction of a new dimension, while the default approach will assert that there is no change at all. As shown in Section~\ref{sec:sim3} Figure~\ref{fig8}, the default method can detect pattern changes with evolving communities, even though it always uses the starting label vector. 

Scenario 2 pertains to situations where the label information is not available for all vertices. When partial labels are known, it suffices to set unknown labels as unused by assigning them a value of $0$, ensuring that the embedding utilizes only the known labels. This approach worked well in the supervised learning task \cite{GEE1}.

Another option, in case of no known labels, is to estimate the label vector at the reference time. One such approach is an iterative ensemble version of encoder embedding \cite{GEEClustering}: it initializes a random label vector, performs the embedding, and then applies k-means clustering to improve the labels. Namely, repeat step 1 - 4, followed by k-means on the embedding to compute new labels, and stop when labels no longer change. Note that, as discussed in scenario 1, this only needs to be done once at the reference time, so the computation impact is limited.

There exist other options to rapidly compute a label vector, such as Leiden, Louvain, constant Potts model, or label propagation \cite{Louvain2008,Leiden2019, traag2011narrow, raghavan2007near}. For example, the real data we considered in this paper used Leiden algorithm at the starting time to produce such a label vector. Once a label vector is obtained at the reference time point, the default method can be applied as usual. 



\section{Scalability}
The proposed method provides significant computational advantage over existing approaches. The time and storage complexity for the entire algorithm, including embedding, normalization, and dynamic statistic computation, is $O(nk T+\sum_{t=1}^{T}s_{t})$ with an overhead constant of about $2$. Furthermore, the method does not require parameter selection nor graph matching, which ensures computational stability. 

Steps 1, 2, 4, and 5 have a time complexity of $O(nkT)$ and are straightforward. Step 3, which involves matrix multiplication and seems computationally expensive, actually can be implemented in $O(\sum_{t=1}^{T}s_{t})$ using two simple edgelist operations: for any $j$th edge in $\mathbf{E}_{t}$, denote $u=\mathbf{E}_{t}(i,1)$, $v=\mathbf{E}_{t}(i,2)$, $w=\mathbf{E}_{t}(i,3)$, Equation~\ref{eq1} is equivalent to iterating through each edge and computing
\begin{align*}
\mathbf{Z}_{t}(u,\mathbf{Y}(v))=\mathbf{Z}_{t}(u,\mathbf{Y}(v))+\mathbf{W}(v,\mathbf{Y}(v))*w, \\
\mathbf{Z}_{t}(v,\mathbf{Y}(u))=\mathbf{Z}_{t}(v,\mathbf{Y}(u))+\mathbf{W}(u,\mathbf{Y}(u))*w. 
\end{align*}
Therefore, step 3 does not require matrix multiplication and can be parallelized across different time steps for large $T$. This allows the method to be optimized for time and storage complexity through parallelization or a streaming process, for which the time complexity can be further improved to $O(n(k+T) +\max_{t=1,\ldots,T}s_{t})$. Furthermore, if the graph lacks any labels, and we employ the iterative ensemble approach \cite{GEEClustering} to estimate the label, the time complexity remains the same with a higher constant for $nK$.

Figure~\ref{fig3} provides a comparison among temporal encoder embedding, the unfolded spectral embedding (USE) \cite{Patrick2022}, and the graph convolutional neural network (GCN) \cite{kipf2017semi}. For the spectral embedding, we utilized a fast sparse implementation and computed only the top $30$ singular values and vectors \cite{Zhang2018SpectralNE}. For GCN, we used a fast, sparse, and GPU-based implementation from the official MATLAB documentation\footnote{\url{https://www.mathworks.com/help/deeplearning/ug/node-classification-using-graph-convolutional-network.html}}, limited to $100$ epochs. For the encoder embedding, we considered both the default implementation with known labels and the scenario without labels, where the iterative ensemble approach is first applied to estimate all labels at the reference time point.  

From the figure, it is evident that the encoder embedding with known labels is the fastest approach, typically being a few times faster than the version without labels due to the extra step required to estimate the labels. In both cases, the encoder embedding is significantly faster in magnitude than the spectral embedding and GCN. For example, at the last x-axis point on the left panel, encoder embedding (without labels) is $20$ times faster than USE and GCN. On the last x-axis point on the right panel, encoder embedding (without labels) is $20$ times faster than GCN and $140$ times faster than USE. These running time results were obtained on a Windows 10 machine with a 12-core Intel i7-6850 CPU, 64GB memory, NVIDIA 1080Ti GPU, and MATLAB 2022a. 

\begin{figure}[ht]
    \centering
	\includegraphics[width=0.5\textwidth,trim={0cm 0cm 0cm 0cm},clip]{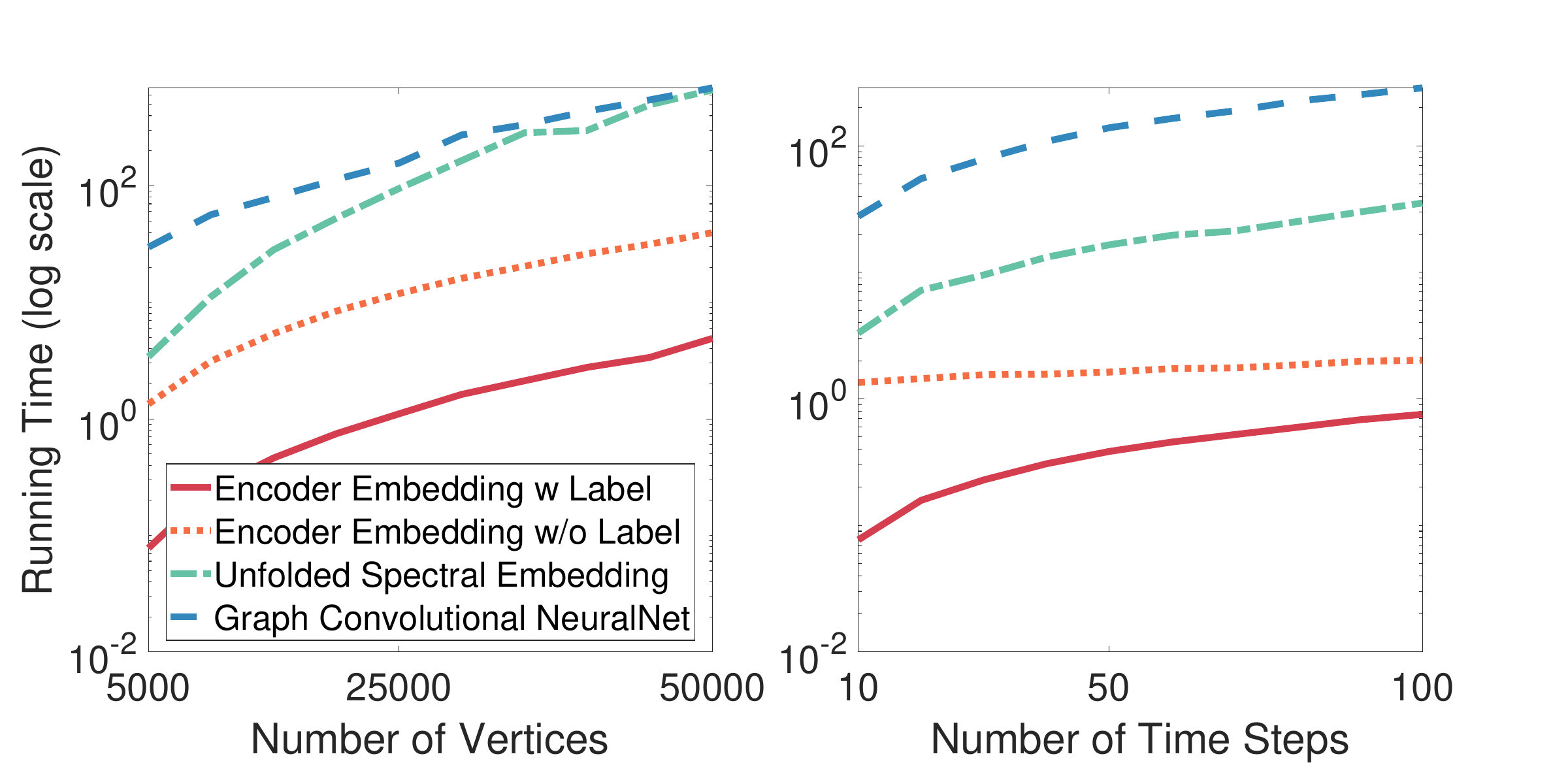}
	\caption{This figure presents a comparison among the encoder embedding with all labels known, the encoder embedding without labels, the unfolded spectral embedding, and graph convolution neural network. The average running time, including both the embedding and computation of dynamic statistics, was computed over 10 replicates. The experiments were conducted with a fixed $K=20$. The left panel varies $n$ from $5000$ to $50000$ with a fixed $t=10$, while the right panel varies $t$ from $10$ to $100$ with a fixed $n=5000$. }
	\label{fig3}
\end{figure}

Note that the results presented in Figure~\ref{fig3} are limited to graphs with up to $n=50,000$ vertices and $5$ million edges, due to the limitations of benchmark methods. 
In fact, the default method with labels, when tested on our local PC, can process a graph with $10$ million vertices and $1$ billion total edges within $5$ minutes. By comparison, the fastest spectral embedding method to date \cite{Zhang2018SpectralNE}, which utilizes sparse structure and has almost linear complexity, takes over $3$ hours to process a graph with $10$ million vertices and only $50$ million edges. Furthermore, Node2vec requires approximately $4$ hours to process a single graph with only $1$ million vertices and $10$ million edges \cite{grover2016node2vec}.

Assuming adequate storage capacity and processing power, we anticipate that the proposed method can embed graphs with $1$ billion vertices and $100$ billion edges within $10$ hours. This level of scalability allows for processing even the largest social networks on a daily or weekly basis. 

\section{Supporting Theory}
Here we establish the mathematical foundation behind the algorithm, under a conditional independent random graph model for the edges. Specifically, we assume that each edge $A_{t}(i,j) | i,j$ is independently generated by a certain distribution of finite second moments. This assumption encompasses many popular random graph models, including the stochastic block model \cite{SnijdersNowicki1997, KarrerNewman2011} and the degree-corrected variant \cite{ZhaoLevinaZhu2012}, as well as the random dot product models \cite{HollandEtAl1983, YoungScheinerman2007}. All proofs are in the appendix. 

\begin{theorem}
Assuming the conditional independent random graph model, the temporal encoder embedding converges to a conditional expectation. Specifically, for a vertex $i$ belonging to community $y$, we have that
\begin{align*}
Z_{t}(i,k) \stackrel{n \rightarrow \infty}{\rightarrow} \frac{a_{t}(i,k)}{\|a_{t}(i,:)\|_{2}},
\end{align*}
where $a_{t}(i,:) \in \mathbb{R}^{K}$ satisfies
\begin{align*}
a_{t}(i,k) = E(\mathbf{A}_{t}(i,j) | \mathbf{Y}_i=y, \mathbf{Y}_j=k).
\end{align*}
\end{theorem}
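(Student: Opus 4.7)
The plan is to decompose the algorithm into two natural pieces — the weighted matrix product in Step 3 and the row normalization in Step 4 — and apply a law of large numbers to the first piece followed by the continuous mapping theorem for the second. First I would unfold Steps 2 and 3 to write, for any vertex $i$ and community $k$,
$$\tilde{\mathbf{Z}}_{t}(i,k) := (\mathbf{A}_{t}\mathbf{W})(i,k) = \frac{1}{n_k}\sum_{j:\, \mathbf{Y}_j = k}\mathbf{A}_{t}(i,j),$$
where the per-class division in Step 2 turns the column sum into an empirical average of the edges from $i$ into community $k$.

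Next, condition on the label vector $\mathbf{Y}$ with $\mathbf{Y}_i = y$. By the conditional independent graph model, the family $\{\mathbf{A}_t(i,j):\mathbf{Y}_j=k\}$ is independent with common conditional mean $a_t(i,k)=E(\mathbf{A}_t(i,j)\mid \mathbf{Y}_i=y,\mathbf{Y}_j=k)$ and finite variance. Assuming each community size satisfies $n_k\to\infty$ as $n\to\infty$, a standard (weak or strong) law of large numbers gives $\tilde{\mathbf{Z}}_{t}(i,k)\to a_t(i,k)$ coordinatewise; since $K$ is fixed, this is equivalent to vector convergence $\tilde{\mathbf{Z}}_{t}(i,\cdot)\to a_t(i,\cdot)$ in $\mathbb{R}^{K}$.

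For the last step, Step 4 applies the map $v\mapsto v/\|v\|_{2}$, which is continuous on $\mathbb{R}^{K}\setminus\{0\}$. Provided $\|a_t(i,\cdot)\|_{2}>0$, the continuous mapping theorem yields
$$\mathbf{Z}_{t}(i,k) = \frac{\tilde{\mathbf{Z}}_{t}(i,k)}{\|\tilde{\mathbf{Z}}_{t}(i,\cdot)\|_{2}} \;\longrightarrow\; \frac{a_t(i,k)}{\|a_t(i,\cdot)\|_{2}},$$
which is the claimed limit.

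The only genuinely delicate step is the normalization: one needs $\|a_t(i,\cdot)\|_{2}>0$, which holds whenever vertex $i$ has positive expected connectivity to at least one community, and one needs each $n_k\to\infty$, which is a mild assumption on community proportions that I would either state explicitly or inherit from a standard SBM/DCSBM/RDPG setup. With those in place, the remaining details — choosing SLLN versus WLLN under finite second moments, passing coordinatewise convergence to the fixed-dimensional vector, and invoking continuous mapping — are routine, and I do not expect any serious technical difficulty beyond them.
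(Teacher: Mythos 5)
Your proposal is correct and follows essentially the same route as the paper: unfold Step 3 into the empirical average $\frac{1}{n_k}\sum_{j:\mathbf{Y}_j=k}\mathbf{A}_t(i,j)$, invoke a law of large numbers under the finite-second-moment and $n_k\to\infty$ assumptions (the paper does this explicitly via a variance bound and Chebyshev), and then pass the limit through the normalization. You are in fact slightly more careful than the paper on the last step, naming the continuous mapping theorem and the non-degeneracy condition $\|a_t(i,\cdot)\|_2>0$ explicitly.
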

The conditional expectation, which is equivalent to the average connectivity of vertex $i$ to community $k$, can be used to capture changes in communication patterns over time. For instance, under the stochastic block model, $a(i,k)$ corresponds to the block probability vector for community $k$. Moreover, it has been shown that the encoder embedding is equivalent to the more computationally expensive spectral embedding up to transformation \cite{GEE1}. 

Now, a slight change in connectivity to any community is reflected in this conditional expectation and results in a different embedding position from a time-series perspective. 
\begin{theorem}
Assuming non-zero conditional expectations, the dynamic statistic of vertex $i$ at time $t$ converges to $0$ asymptotically if and only if $\theta(a_{t}(i,:), a_{1}(i,:))=0$, and it converges to $1$ asymptotically if and only if $\theta(a_{t}(i,:), a_{1}(i,:))= \pi/2$, where $\theta(\cdot,\cdot)$ denotes the angle between two vectors.
\end{theorem}
The theorem provides a geometric interpretation for the dynamic statistics. A vertex dynamic of $0$ means that the vertex maintains the same correspondence with the communities up to multiplication, for example, $a_t(i,:)=(2,2,4)$ and $a_1(i,:)=(1,1,2)$. Conversely, a maximum vertex dynamic suggests that the vertex has either stopped being active or moved to a space orthogonal to its position at time $1$, for example, $a_t(i,:)=(1,1,0)$ and $a_1(i,:)=(0,0,2)$. 
A vertex dynamic of $0.5$ implies a 30-degree angle difference in communication pattern. 
The same interpretations hold for the community and graph dynamic statistics. 

\section{Simulations}
In the first simulation, our objective is to demonstrate the stability of the resulting embedding in the context of a stable dynamic network with small noise. In the second simulation, we evaluate the method's capability to identify vertex outliers. In the third simulation, we verify its ability to capture community pattern changes while remaining robust against community reassignment. For graph generation, we consistently utilized the degree-corrected stochastic block model, which approximates real sparse graphs more effectively than alternative models such as the standard stochastic block model or random dot product model. More detailed information on the network generation process and model parameters can be found in the appendix.

\subsection{Stable Network}
This simulated network data was generated using a weighted version of the degree-corrected stochastic block model, incorporating varying edge weights over time. The network comprises $30,000$ vertices with positive edge weights within the range $[1,...100]$. Over a span of $96$ time steps, random noise gradually affects the edge weights, causing them to become more distinct as $t$ increases. The total number of edges across all time steps approximates $420$ million. The embedding process was completed in approximately $13$ seconds, with an additional $0.6$ seconds required for dynamic statistics computation.

The embedding is observed to be remarkably stable, as demonstrated in Figure~\ref{fig1} for the first three dimensions and communities. The same stability is observed in the vertex dynamic statistics, as illustrated in Figure~\ref{fig2}. Furthermore, the dynamic statistics effectively capture the incremental noise. At time step 2, the graph dynamics are below $0.001$, gradually increasing to below $0.01$ by time step 12, and eventually reaching $0.025$ at time step 96. The vertex dynamics follow a similar increasing trajectory, as evident in Figure~\ref{fig2}.  These results clearly demonstrate that our proposed approach successfully captures both the stable nature of the network and the effects of small incremental noise.

\begin{figure*}[ht]
    \centering
	\includegraphics[width=1.0\textwidth,trim={0cm 0cm 0cm 0cm},clip]{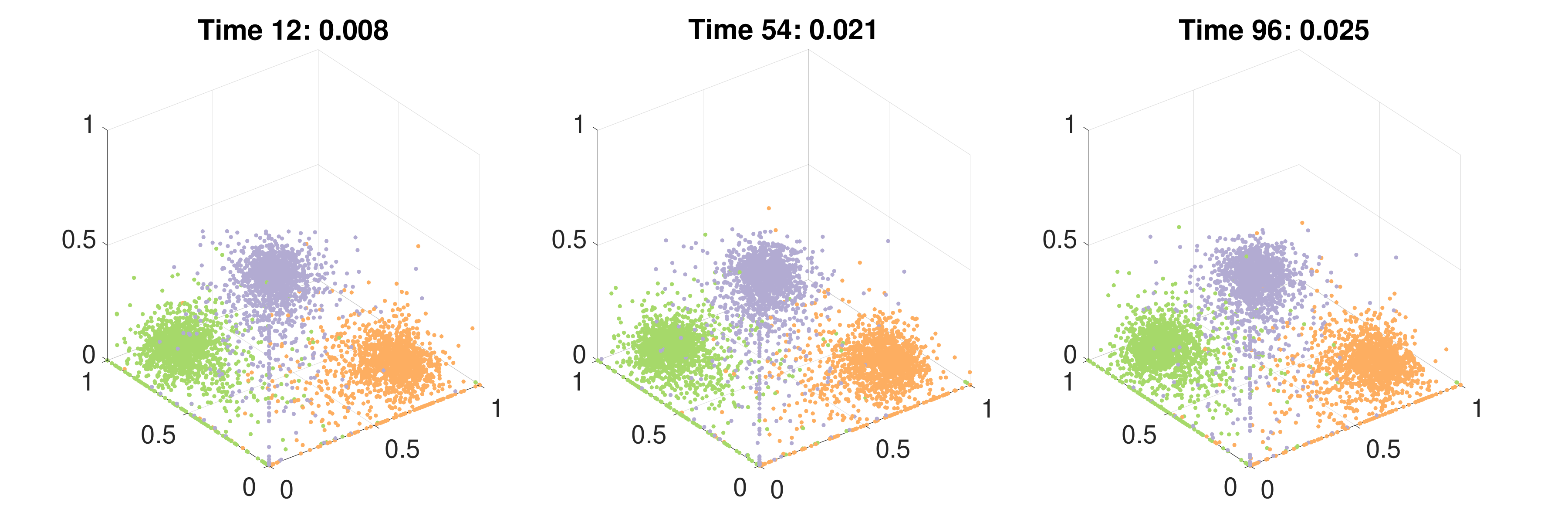}
	\caption{3D Visualization of the first 3 communities' vertices at three different times for the simulated graph. The graph dynamic at each time is shown on top. }
	\label{fig1}
\end{figure*}

\begin{figure*}[ht]
    \centering
	\includegraphics[width=1.0\textwidth,trim={0cm 0cm 0cm 0cm},clip]{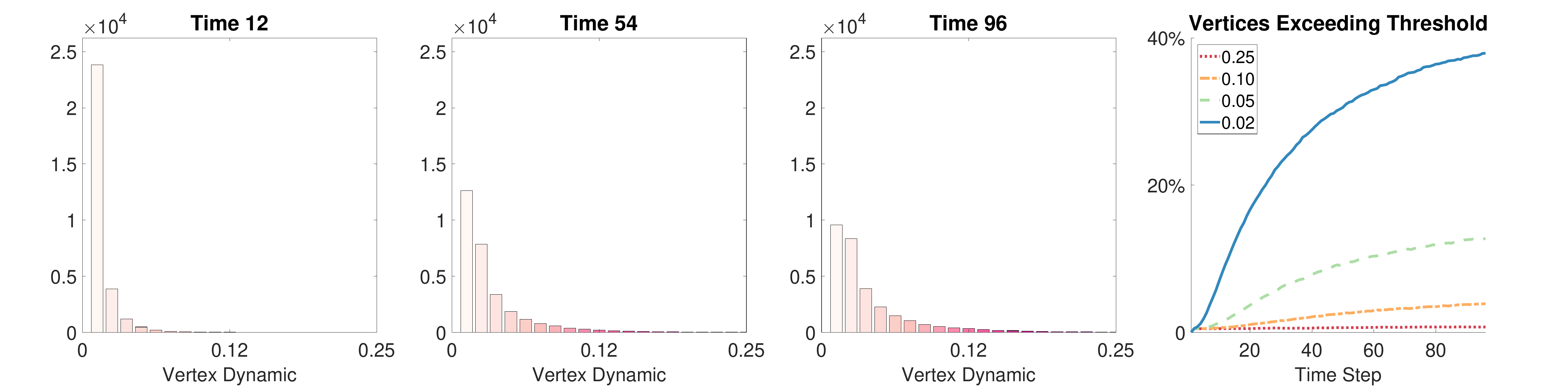}
	\caption{Visualization of the vertex dynamic statistics as time progresses. For the first 3 panels, the y-axis represents the number of vertices, while the x-axis represents the extent to which the vertices have shifted. As time increases, more vertices start to shift away from their starting positions due to noise. The last panel shows the percentage of vertices exceeding the vertex dynamic threshold at the last time step. }
	\label{fig2}
\end{figure*}

\subsection{Outlier Vertices}
In this simulation, we utilized the same model setting at a smaller scale, using $n=1000$ vertices and $T=10$ time steps. Furthermore, we introduced $10$ extreme outlier vertices at $t=10$, each having one or two incident edge weights randomly assigned within the range of $[500,1000]$. This outlier communication scenario created a challenge in detecting the outliers since they only occurred at the final time point and involved a limited number of edges, while the remaining edge weights ranged from $1$ to $100$.

For comparison purpose, we also consider the unfolded spectral embedding (USE), which has cross-sectional and longitudinal stability guarantee for time-series networks \cite{Patrick2022}. The data is embedded into $d=10$ dimensions via USE, followed by computing the per-vertex Euclidean distance between time $10$ and time $1$ as the outlier measure. The top row of Figure~\ref{fig7} shows the histogram of outlier measure. The vertex dynamic statistic by encoder embedding reveals that most vertices underwent minimal shifts, while USE suggests that a larger number of vertices experienced significant shifts. The bottom panel of the figure shows the magnitude and percentile ranking of the $10$ outliers. It is evident that temporal encoder embedding outperformed USE in identifying outliers. Specifically, the outliers identified using temporal encoder embedding achieved a vertex dynamic ranking of $[1,2,3,4,5,7,8,9,14,36]$, while the corresponding ranking using USE was $[42,69,89,114,147,168,235,318,411,475]$, indicating that these outliers were obscured among the other vertices. These findings remained consistent when considering different values of $d$ in the USE method.


\begin{figure}[ht]
	\includegraphics[width=0.5\textwidth,trim={0cm 0cm 0cm 0cm},clip]{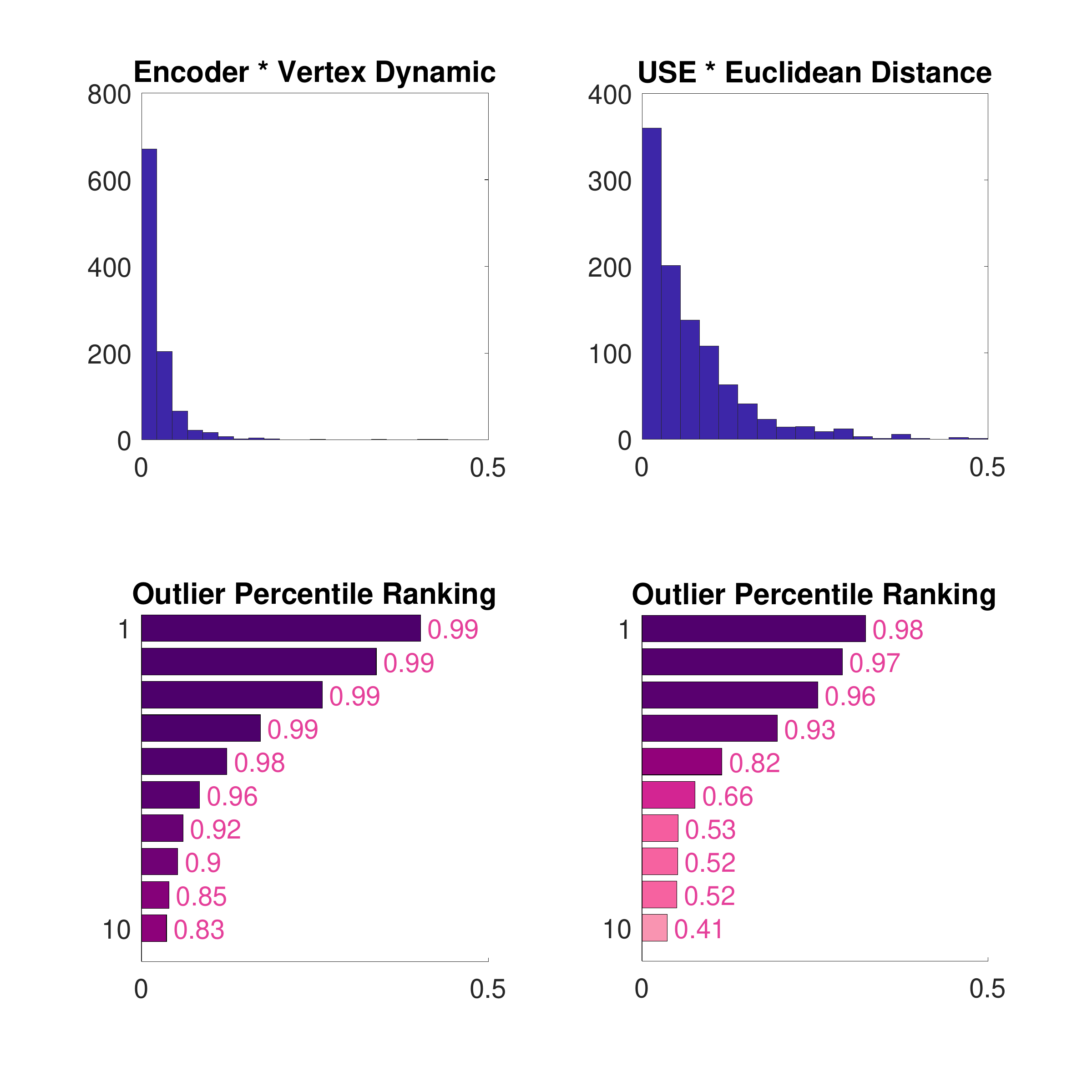}
	\caption{This figure compares temporal encoder embedding and unfolded spectral embedding in detecting $10$ extreme outliers.}
	\label{fig7}
\end{figure}

\subsection{Pattern Shift with Community Change}
\label{sec:sim3}
In this simulation, we explore various pattern shift scenarios as time progresses. Beginning with $n=30000$ vertices and $K=3$ communities, the first graph at $t=1$ is generated using a degree-corrected stochastic block model, where within-community vertices are more likely to be connected than between-community vertices. As shown in the top left panel of Figure~\ref{fig8}, this initial communication pattern is effectively captured by the encoder embedding. In the figure, different colors represent different communities, and these communities are distinctly separated, indicating their distinct communication patterns. It is worth noting that such a pattern is common in many organizations, where team members within a group tend to communicate more frequently than with members from other groups.

At $t=2$, the second graph is generated to model typical shifting pattern. Everything else remains the same as at $t=1$, vertices from community $3$ are now equally likely to connect with every other vertex. This essentially brings community $3$ much closer to the other two communities. This change is evident in the top right panel of Figure~\ref{fig8}, where community $3$ (represented by the blue dots on top) appears closer to the other two communities in the resulting embedding. Note that such a change is common in a corporate setting, for instance, due to a shift in work scope where individuals in community $3$ now need to communicate equally often with everyone in the company.

At $t=3$, the third graph is generated to model a shifting pattern involving a community split. Keeping everything else the same as at $t=2$, each vertex in community $3$ is randomly reassigned to a new community $4$ with a $50\%$ probability. Community $4$ retained the same communication pattern as community $3$ had at $t=1$. The bottom left panel of Figure~\ref{fig8} clearly captures this change, as communities $3$ and $4$ now have distinct embedding. Moreover, when comparing the embedding at $t=3$ to that at $t=1$, it is noticeable that this new community $4$ at $t=3$ (the brown dots on top) occupies a similar Euclidean position as community $3$ at $t=1$. Such splits are typical in organizational restructuring within a corporate setting.

At $t=4$, the fourth graph is generated to model a shifting pattern involving a community merge. In this scenario, community $3$ is merged with community $1$ and set to the same connectivity as community $1$. The bottom right panel of Figure~\ref{fig8} illustrates that the embedding of communities $3$ and $1$ essentially merge together. Such mergers are also typical in organizational restructuring within a corporate setting.

Note that all four graphs and the initial label vector were used as input to generate the embedding. The updated label vector resulting from the community label split at $t=3$ was not used in the embedding process. Instead, it was employed solely for the purpose of figure visualization, allowing us to track whether the pattern change matches the ground truth. The visualization demonstrates that using the same label in the temporal encoder embedding successfully captures the pattern shifts, even in the presence of significant community changes. 

Taking $t=1$ as the reference time point, the community dynamic statistics from $t=2$ to $t=4$ are as follows:
\begin{itemize}
\item Community 1: $0.03, 0.01, 0.09$; 
\item Community 2: $0.03, 0.01, 0.01$; 
\item Community 3: $0.31, 0.17, 0.22$.
\end{itemize}
These statistics correspond to the starting label vector, so there is no community $4$. We interpret these statistics as follows: 
\begin{itemize}
\item Community $2$ experienced minimal changes throughout, with a slightly larger statistic at $t=2$ due to a slight increase in communication with community $3$.
\item The same pattern holds for community $1$ at $t=2$ and $t=3$, but the statistic becomes larger at $t=4$ due to its merger with new vertices.
\item For vertices in community $3$, they underwent significant changes at every time point, with the most substantial change occurring at $t=2$ when every vertex in community $3$ shifted their connectivity significantly. At $t=3$ and $t=4$, approximately half of the vertices shifted back to their original pattern at $t=1$, resulting in smaller but still significant dynamic statistics.
\end{itemize}
Overall, the temporal embedding successfully captures all intended pattern changes throughout this simulation, and the temporal dynamic statistics serve their purpose in summarizing the magnitude of the change. 

\begin{figure}[ht]
	\includegraphics[width=0.5\textwidth,trim={0cm 0cm 0cm 0cm},clip]{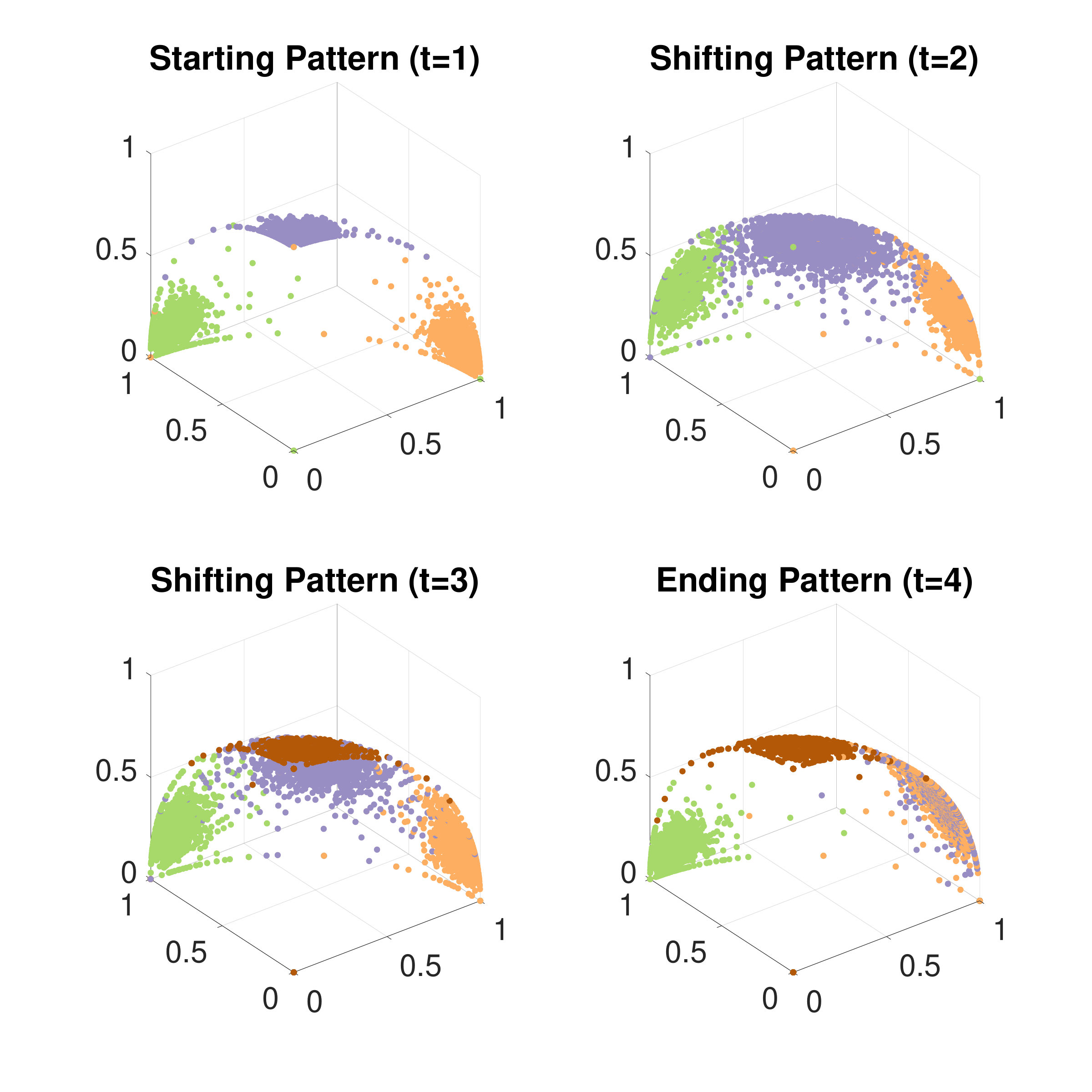}
	\caption{This figure visualizes how the encoder embedding successfully detects the changing communication pattern despite community label changes.}
	\label{fig8}
\end{figure}

\section{Communication Network during Covid-19 Pandemic}

The dataset used in this study was obtained from Microsoft, consisting of $116$ thousand anonymized and aggregated entities. The dataset spans 24 months, from January 2019 to December 2020, and contains over $80$ million weighted edges among $39$ organizational groups. Namely, $n=116,508$, $T=24$, and $K=39$. The edge weight represents volume of email communication between the connected vertices. Our objective is to analyze this data and evaluate the effects of the Covid-19 pandemic on this communication network. Please note that the $39$ vertex communities were estimated by applying the Leiden algorithm \cite{Leiden2019} (we used the Python implementation via the graph statistic package\footnote{\url{https://github.com/microsoft/graspologic}}) to the graph at the starting month, which took approximately $30$ seconds. 

The Covid-19 pandemic has had a significant impact on many corporations and how people work. It is a global event that has caused an increase in reliance on digital communication and remote work. It is important to gain an understanding of how the pandemic, work-from-home policies, and the rise of remote work have affected work patterns \cite{kleinbaum2013discretion, jacobs2021large, cep2022}. To that end, analyzing intra-organization communication networks, such as email and chat correspondence, can provide valuable insights into how pandemic-related changes have influenced work behavior, which is of interest for organizational restructuring purposes.

By employing a suitable graph embedding, we can obtain a Euclidean representation for each entity in the communication network and track pattern changes over time via a proper distance measure. This facilitates the identification of vertices and communities that have experienced significant changes as a result of the pandemic, as well as those that have remained unaffected by it.

\subsection{Embedding and Visualization}

Temporal encoder embedding was applied to this dataset and completed in $10$ seconds on a standard Windows 10 PC. The embedding is of size $116508 \times 39 \times 24$. To visualize the embedding, we further applied UMAP \cite{UMAP} to the graph embedding, which a fast and effective projection tool for 2D visualization. Figure~\ref{fig4} illustrates the visualization of the embedded network for 2019 June, 2019 Dec, and 2020 June. While the time-series graph retains a common structure across time, it is apparent that numerous groups and vertices undergo significant changes.

\begin{figure*}[ht]
\centering
\includegraphics[width=.32\linewidth]{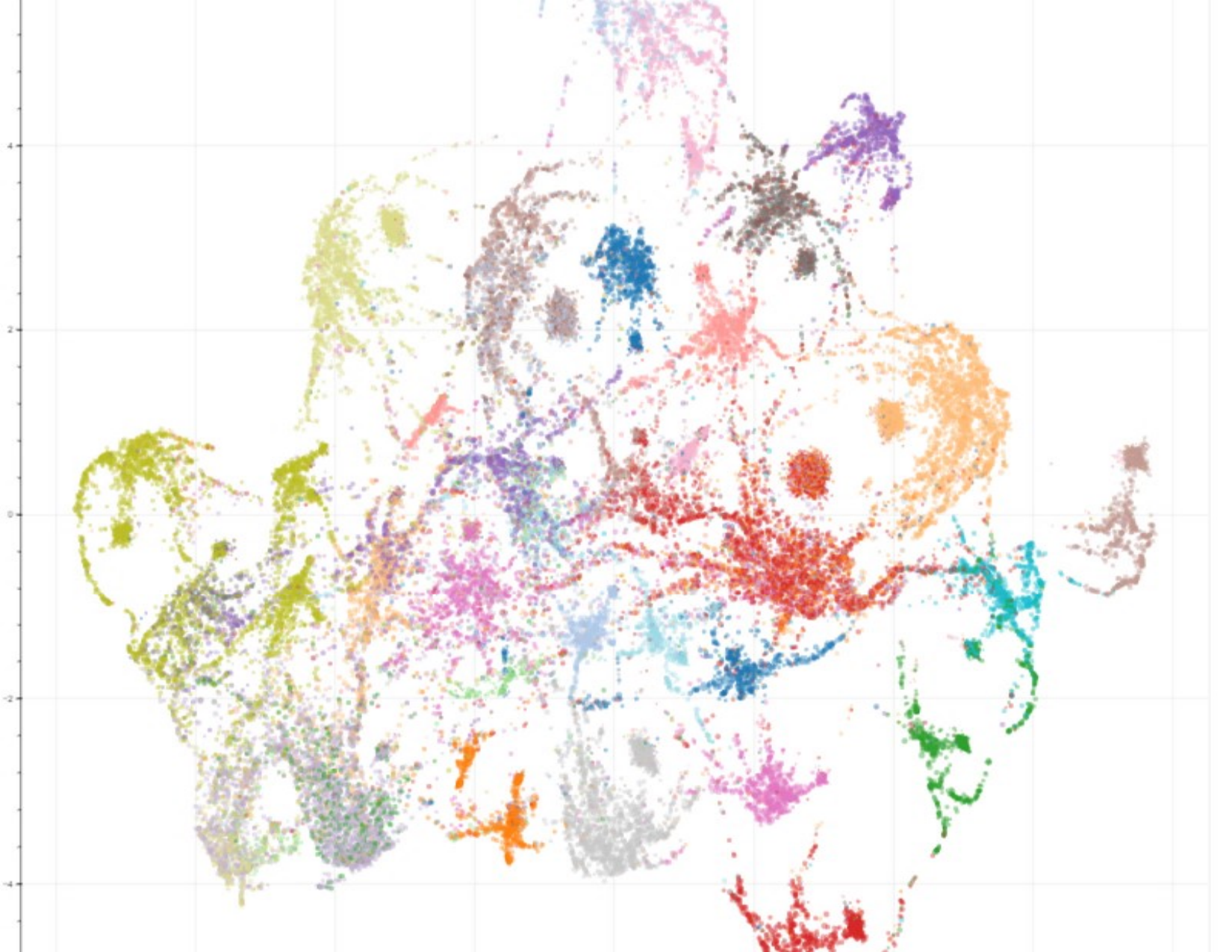}
\includegraphics[width=.32\linewidth]{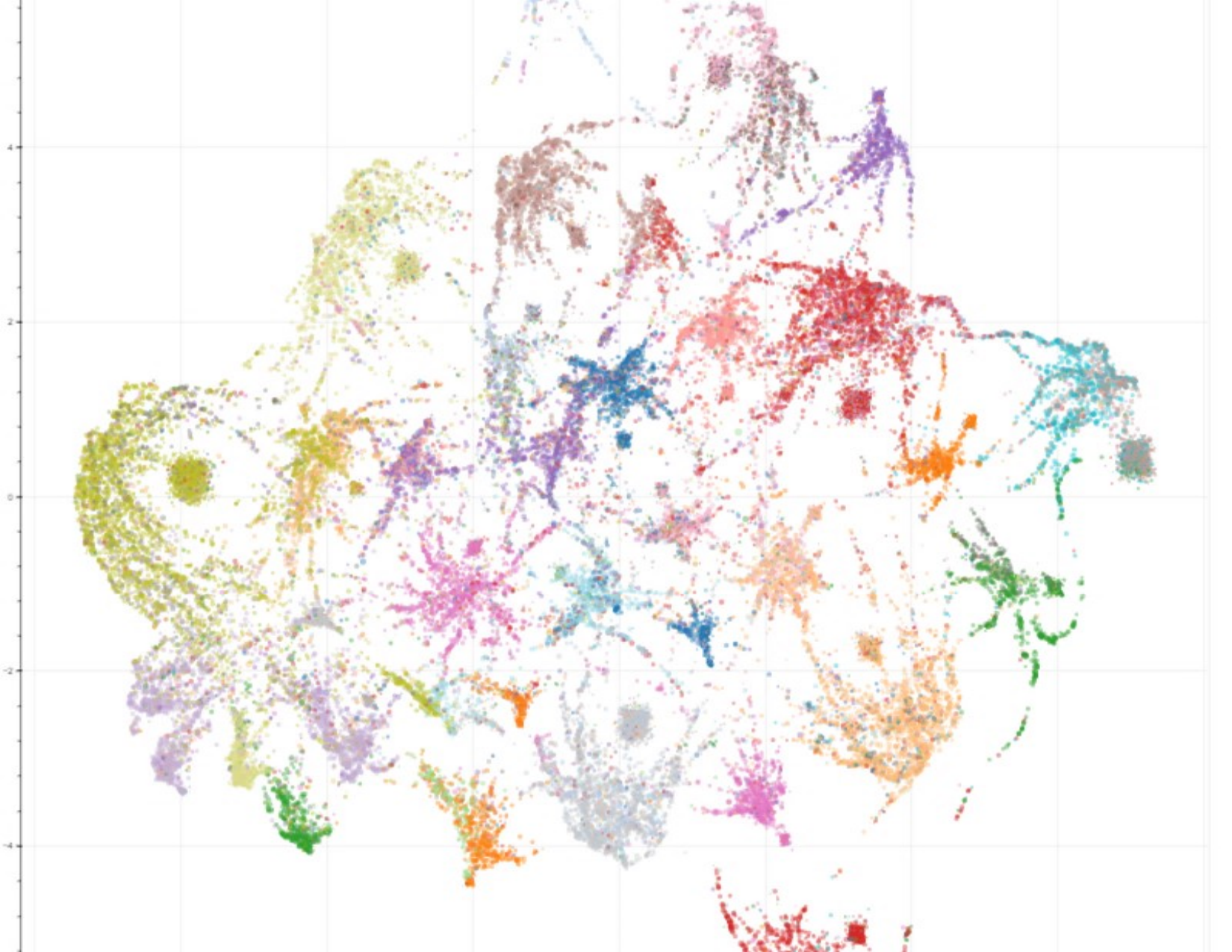}
\includegraphics[width=.32\linewidth]{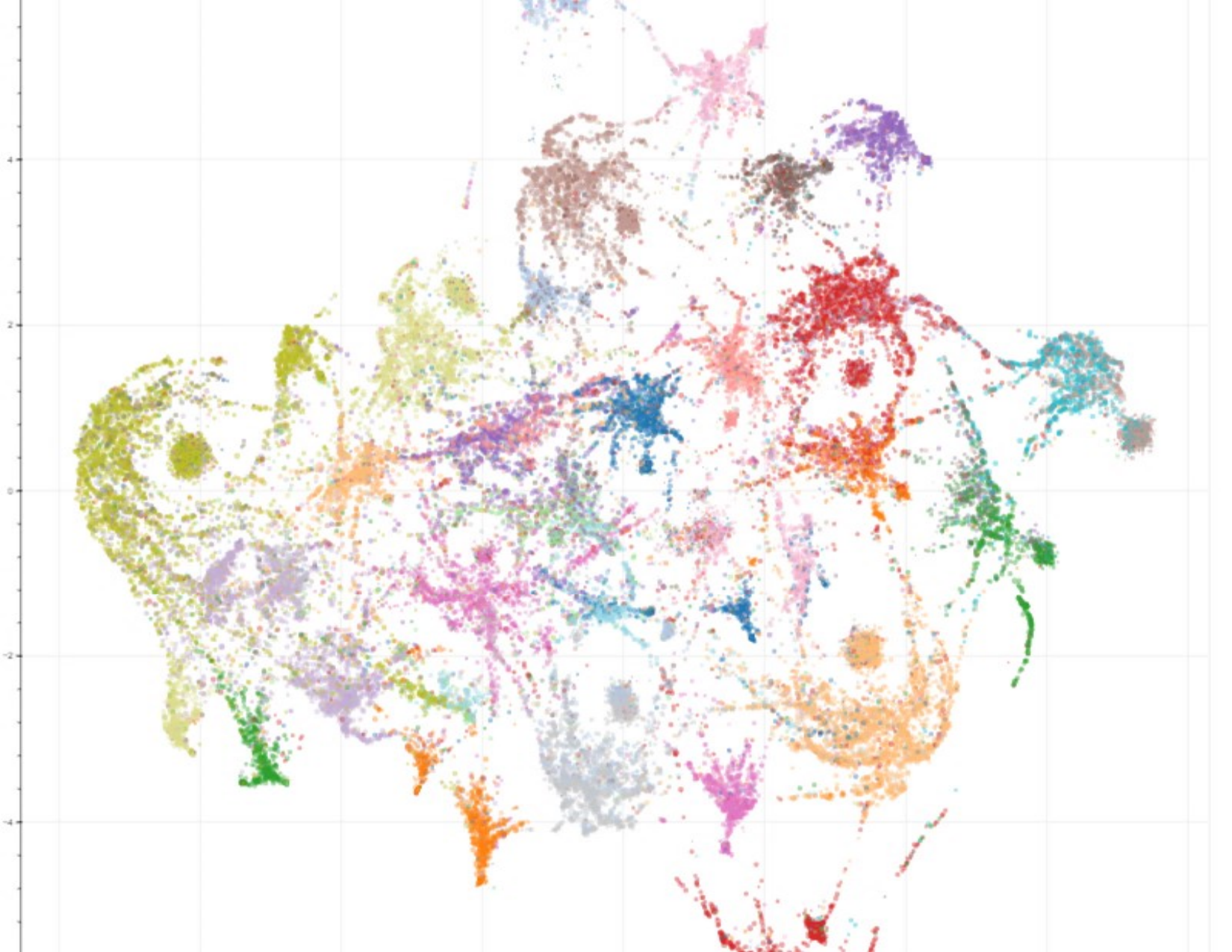}
\caption{The whole communication network visualized by temporal encoder embedding and UMAP. The color scheme differentiates the various groups within the organization. The proximity of individual nodes in the 2D projection indicates a higher frequency of communication between them, especially those belonging to the same group. The graph dynamic statistic is displayed on top.}
\label{fig4}
\end{figure*}

\subsection{Change Detection via Temporal Dynamics}
The temporal encoder embedding enables three types of temporal dynamic statistics to detect communication pattern changes: vertex dynamic statistics for each of the $116,508$ vertices over 24 months, community dynamic for each of the $39$ communities, and graph dynamic for the entire network. These dynamic statistics lie in $[0,1]$, with lower values indicating minimal change in edge connectivity and higher values indicating a greater deviation from the baseline time point. It is worth noting that the dynamic statistics necessitate a baseline time point, which was set as January 2019, but it can be modified to any other month, such as January 2020, to focus solely on the pandemic's impact.

The computation of temporal dynamic statistics took approximately one second after generating the embedding. Figure~\ref{fig5} displays the temporal dynamics for selected vertices and communities, representative of changes in communication patterns. The left panel shows that vertex $29$ and $66$ experienced significant changes around March 2020 due to the pandemic, while vertex $8$ underwent similar but smaller changes. In contrast, vertex $7$ had more gradual changes in communication patterns, while vertices $6$ and $41$ exhibited communication patterns that were relatively unaffected by the pandemic.

The right panel of Figure~\ref{fig5} displays changes in the community and graph dynamic statistics over time. The graph dynamic statistics demonstrate that pattern changes were occurring even before the pandemic, with values of $0.2$ by December 2019, $0.31$ by June 2020, and $0.33$ by December 2020. The pandemic caused an accelerated increase in early 2020, but the change stabilized after June 2020. We speculate that remote work was already influencing the communication structure before the pandemic, and Covid-19 simply accelerated and completed the trend, leading to a relatively stable communication network after the pandemic. In addition, the community dynamic identifies communities that experienced significant changes and those that remained relatively unchanged. Communities 8 and 32 shifted their communication patterns, while communities 3 and 39 were minimally impacted by the pandemic. Note that the community and graph dynamic statistics are more smooth than vertex dynamic statistics due to their aggregated nature. 

Finally, different thresholds can be applied to the temporal dynamic statistics to identify outlier and inlier vertices. For example, using a threshold of $0.5$ for the vertex dynamic identifies $11\%$ outlier vertices in June 2019, $17\%$ in December 2019, $29\%$ in June 2020, and $31\%$ in December 2020. This again suggests a trend of change before the pandemic that was significantly accelerated in March and April of 2020 before stabilizing. A histogram of the temporal vertex dynamic in Figure~\ref{fig6} confirms this trend. Conversely, a threshold of $0.1$ identifies inliers, with $73\%$ in June 2019, $63\%$ in December 2019, $48\%$ in June 2020, and $44\%$ in December 2020. These results suggest that almost half of the vertices maintained their pre-pandemic communication patterns, either because remote work was already part of their work-style or because their work could not be performed remotely.

\begin{figure}[ht]
    \centering
	\includegraphics[width=0.5\textwidth,trim={0cm 0cm 0cm 0cm},clip]{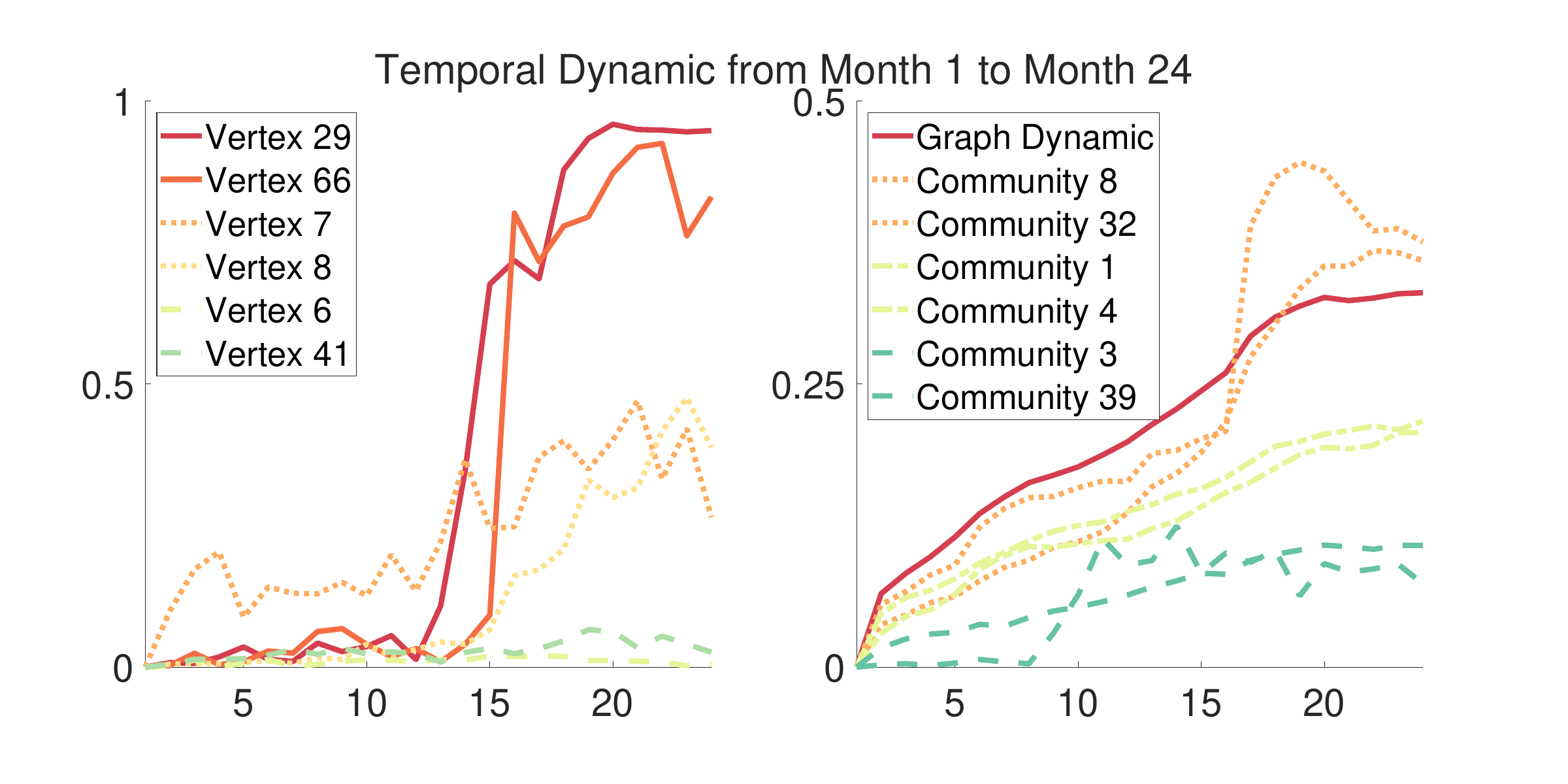}
	\caption{Visualization of the temporal dynamic statistics. The left panel shows the vertex dynamics for several selected vertices. The right panel shows the graph and the community dynamic statistics for chosen communities. }
	\label{fig5}
\end{figure}

\begin{figure*}[ht]
    \centering
	\includegraphics[width=1.0\textwidth,trim={0cm 0cm 0cm 0cm},clip]{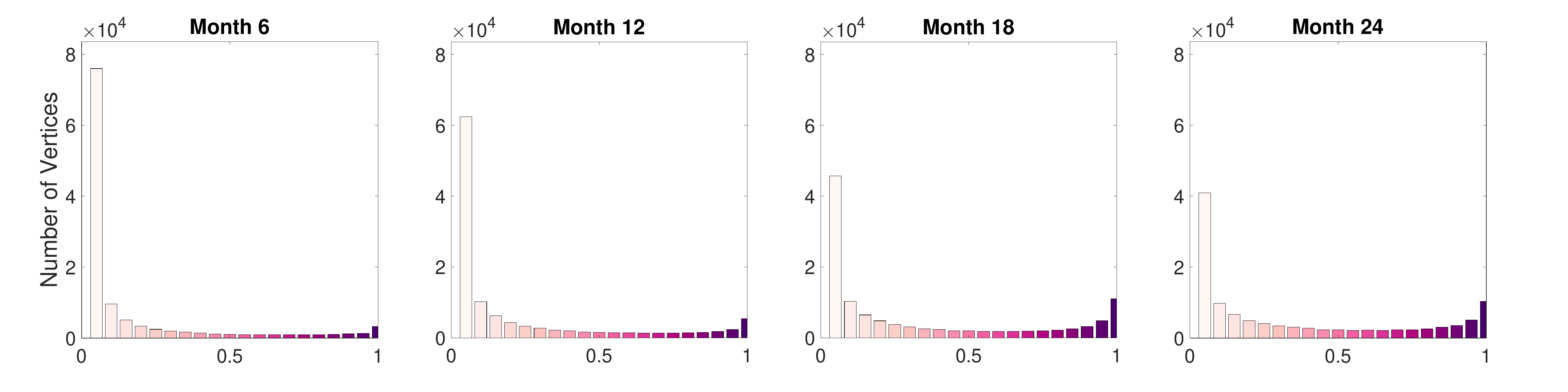}
	\caption{Histogram showing the distribution of vertex dynamic statistics at different time points. }
	\label{fig6}
\end{figure*}

\section{Conclusion}
The paper presents a fast and efficient graph embedding method for large-scale dynamic networks. The proposed method has linear computational and storage complexity with respect to the number of vertices and edges, making it highly scalable and capable of embedding even billions of vertices on a daily basis. The synthetic study confirms the method's numerical advantages, and its asymptotic behavior is characterized by conditional expectation under a random graph model. It was successfully applied to a large communication network during the Covid-19 pandemic, revealing significant communication pattern shifts and identifying individual vertices and communities most or least impacted by the pandemic. The proposed method not only allows for direct application to large-scale graph data, but also enables further method development and theoretical understanding within this scalable framework.

There are several open areas for further investigation, and one of these areas pertains to the reference time and its impact on the method. The reference time influences the approach in two key aspects. Firstly, the dynamic statistics are defined with respect to the reference time. Secondly, in cases where ground-truth labels are absent, the label vector is estimated using the graph data at the reference time. The first aspect is straightforward, as dynamic statistics and pattern shifts are inherently relative. Choosing a different reference time merely means that any pattern shifts are assessed in relation to the graph structure at that specific reference time. For instance, community $1$ might have shifted significantly from time $1$ to time $100$, resulting in substantial dynamic statistics. However, if time $50$ is chosen as the reference time instead of time $1$, the dynamic statistics are computed with respect to time $50$, which may yield smaller statistics.

The second aspect is particularly profound and intriguing. Choosing a different reference point can result in entirely different labels, which, in turn, can significantly impact the final embedding. While simulations have demonstrated that a significant amount of pattern shifts are discernible as long as the same label is used consistently across each time step, it is certainly possible that using a coarse label vector may obscure some pattern changes. For example, when working with data that has a ground-truth of $10$ classes, estimating only $9$ classes for labeling can certainly hide some information in the embedding. Conversely, if a finer label vector is estimated, it may have the potential to reveal previously hidden structural shifts. Therefore, the impact of the estimated label on the method is an interesting area for further exploration.



\bibliographystyle{ieeetr}
\bibliography{shen,general}

\begin{IEEEbiography}
[{\includegraphics[width=1in,height=1.25in,clip,keepaspectratio]{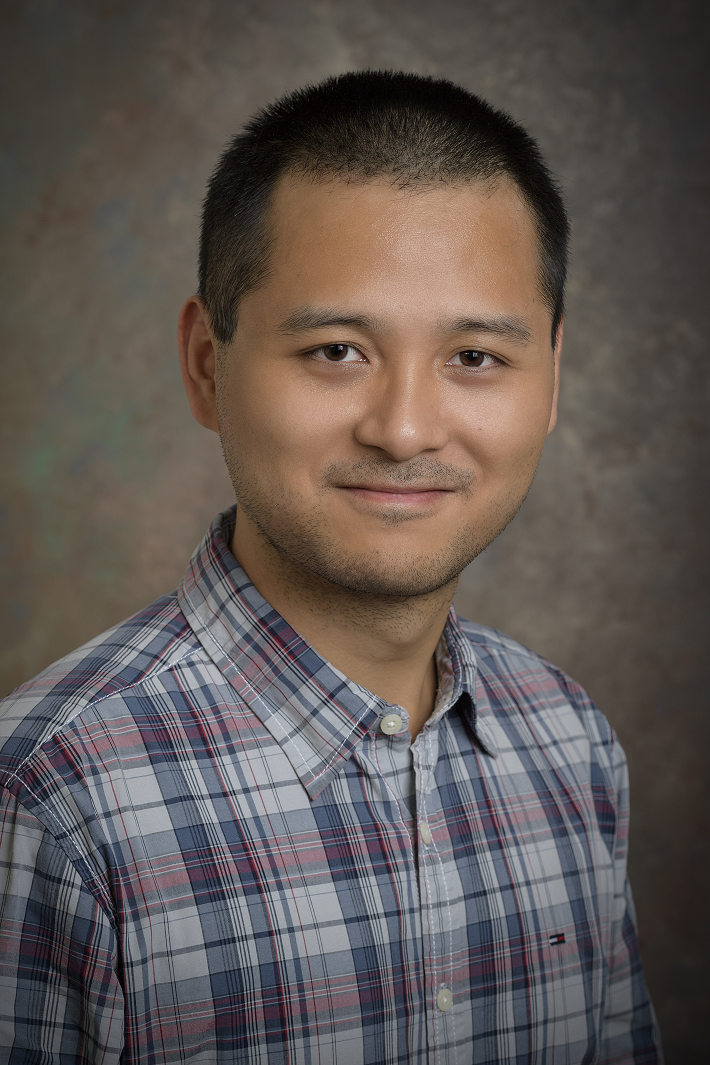}}]{Cencheng Shen received the BS degree in Quantitative Finance from National University of Singapore in 2010, and the PhD degree in Applied Mathematics and Statistics from Johns Hopkins University in 2015. He is an 
associate professor in the Department of Applied Economics and Statistics at University of Delaware. His research interests include graph inference, neural network, correlation and dependence.}
\end{IEEEbiography}
\begin{IEEEbiography}
[{\includegraphics[width=1in,height=1.25in,clip,keepaspectratio]{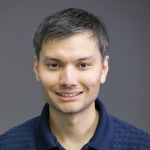}}]{Jonathan Larson is a Principal Data Architect at Microsoft Research working on Special Projects.  His applied research work focuses on petabyte-scale data infrastructure, data science applications, network analytics, and information visualization.  He has applied experience in organizational science, neuroscience, cyber-security, counter-human trafficking, fraud analytics, mobile device analytics, media management, retail analytics, and real estate. At Microsoft, Jonathan leads a research team of developers and data scientists focused on new approaches and applications for scalable network machine learning. }
\end{IEEEbiography}
\begin{IEEEbiography}
[{\includegraphics[width=1in,height=1.25in,clip,keepaspectratio]{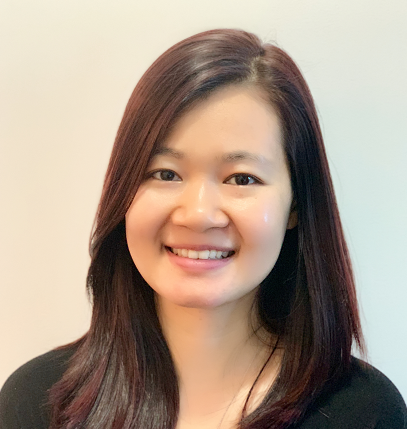}}]{Ha Trinh received a BS degree in Applied Computing from the University of Dundee in 2009 and a PhD degree in Computing from the same university in 2013. She is currently working as a data scientist at Microsoft Research. Her research interests lie at the intersection of Artificial Intelligence and Human-Computer Interaction.}
\end{IEEEbiography}
\begin{IEEEbiography}
[{\includegraphics[width=1in,height=1.25in,clip,keepaspectratio]{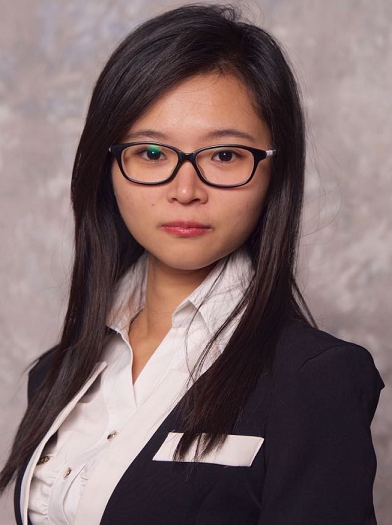}}]{Xihan Qin received the MS degree in Biotechnology from Georgetown University in 2015 and the MS degree in Bioinformatics from University of Delaware in 2021. She is currently a PhD candidate in Computer Science at University of Delaware. Her research interests include graph machine learning, bioinformatics, and computational biology. }
\end{IEEEbiography}
\begin{IEEEbiography}
[{\includegraphics[width=1in,height=1.25in,clip,keepaspectratio]{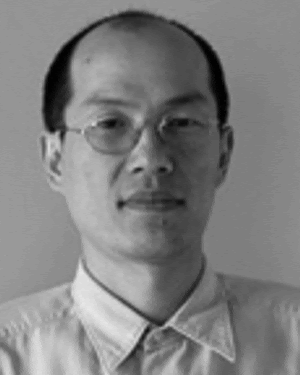}}]{Youngser Park received the B.E. degree in electrical engineering from Inha University in Seoul, Korea in 1985, the M.S. and Ph.D. degrees in computer science from The George Washington University in 1991 and 2011 respectively. From 1998 to 2000 he worked at the Johns Hopkins Medical Institutes as a senior research engineer. From 2003 until 2011 he worked as a senior research analyst, and has been an associate research scientist since 2011 then research scientist since 2019 in the Center for Imaging Science at the Johns Hopkins University. At Johns Hopkins, he holds joint appointments in the The Institute for Computational Medicine and the Human Language Technology Center of Excellence. His current research interests are clustering algorithms, pattern classification, and data mining for high-dimensional and graph data.}
\end{IEEEbiography}
\begin{IEEEbiography}
[{\includegraphics[width=1in,height=1.25in,clip,keepaspectratio]{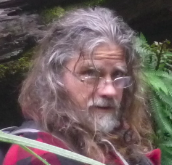}}]{Carey E. Priebe received the BS degree in mathematics from Purdue University in 1984, the MS degree in computer science from San Diego State University in 1988, and the PhD degree in information technology (computational statistics) from George Mason University in 1993. From 1985 to 1994 he worked as a mathematician and scientist in the US Navy research and development laboratory system. Since 1994 he has been a professor in the Department of Applied Mathematics and Statistics at Johns Hopkins University. His research interests include computational statistics, kernel and mixture estimates, statistical pattern recognition, model selection, and statistical inference for high-dimensional and graph data. He is a Senior Member of the IEEE, an Elected Member of the International Statistical Institute, a Fellow of the Institute of Mathematical Statistics, and a Fellow of the American Statistical Association.}
\end{IEEEbiography}

\clearpage
\setcounter{figure}{0}
\setcounter{theorem}{0}
\renewcommand{\thealgorithm}{C\arabic{algorithm}}
\renewcommand{\thefigure}{E\arabic{figure}}
\renewcommand{\thesubsection}{\thesection.\arabic{subsection}}
\renewcommand{\thesubsubsection}{\thesubsection.\arabic{subsubsection}}
\pagenumbering{arabic}
\renewcommand{\thepage}{\arabic{page}}

\bigskip
\begin{center}
{\large\bf APPENDIX}
\end{center}
\section{Proofs}
\label{sec:proofs}

The theoretical results assume conditional independent edge generation, i.e., each entry $A_{t}(i,j) | i,j$ is independently generated with finite second moments. This assumption covers most popular random graph models, including the stochastic block model \cite{SnijdersNowicki1997, KarrerNewman2011} and the degree-corrected variant \cite{ZhaoLevinaZhu2012}, as well as the random dot product models \cite{HollandEtAl1983, YoungScheinerman2007}. 

Another implicit assumption is $n_k =O(n)$, i.e., as vertex size increases to infinity, the number of vertices in each community also increase to infinity. This assumption is always satisfied by any probability model because if not, the prior probability of this community would be zero, making it non-existent.

\begin{theorem}
Assuming the conditional independent random graph model, the temporal encoder embedding converges to a conditional expectation. Specifically, for a vertex $i$ belonging to community $y$, we have that
\begin{align}
\label{eq2}
Z_{t}(i,k) \stackrel{n \rightarrow \infty}{\rightarrow} \frac{a_{t}(i,k)}{\|a_{t}(i,:)\|_{2}},
\end{align}
where $a_{t}(i,:) \in \mathbb{R}^{K}$ satisfies
\begin{align*}
a_{t}(i,k) = E(\mathbf{A}_{t}(i,j) | \mathbf{Y}_i=y, \mathbf{Y}_j=k).
\end{align*}
\end{theorem}
\begin{proof}
The convergence of Equation~\ref{eq2} can be proved by showing that the un-normalized embedding of vertex $i$ converges to $a_{t}(i,k)$ for each dimension $k$. Because once the convergence of the un-normalized embedding is established, the norm shall converge to $|a_{t}(i,:)|_{2}$.

To do so, we decompose step 3 for vertex $i$ and dimension $k$. Then the un-normalized embedding equals
\begin{align*}
\mathbf{A}_{t}(i,:)\mathbf{W}(:,k) =\frac{\sum_{j=1, j\neq i, Y_j=k}^{n} \mathbf{A}_{t}(i,j)}{n_k}.
\end{align*}
Note that since we already assumed $\mathbf{Y}_i=y$, all remaining equations are implicitly conditioned on $\mathbf{Y}_i=y$.

Its expectation satisfies 
\begin{align*}
E(\mathbf{A}_{t}(i,:)\mathbf{W}(:,k)) & = E( \frac{\sum_{j=1, j\neq i, Y_j=k}^{n} \mathbf{A}_{t}(i,j)}{n_k}) \\
& = \frac{\sum_{j=1, j\neq i, Y_j=k}^{n}E(\mathbf{A}_{t}(i,j)|Y_j=k )}{n_k} \\
& = E(\mathbf{A}_{t}(i,j)|Y_j=k )
\end{align*}
Note that the last equality holds if $Y_i \neq k$. In case of $Y_i=k$, the numerator on line 2 only has $n_k -1$ term, resulting in the need to adjust the last line to $\frac{n_k-1}{n_k} E(\mathbf{A}_{t}(i,j)|Y_j=k )$. However, as $n$ and $n_k$ approach infinity, the two cases become asymptotically equivalent, so it suffices to consider the former case for limiting $n$. 

Moreover, its variance satisfies
\begin{align*}
Var(\mathbf{A}_{t}(i,:)\mathbf{W}(:,k)) &= Var( \frac{\sum_{j=1, j\neq i, Y_j=k}^{n} \mathbf{A}_{t}(i,j)}{n_k})\\
&= \sum_{j=1, j\neq i, Y_j=k}^{n} \frac{Var(\mathbf{A}_{t}(i,j)|Y_j=k)}{n_k^2}\\
&\leq \frac{n_k M}{n_k^2}\\
& \stackrel{n \rightarrow \infty}{\rightarrow} 0,
\end{align*}
where $M$ denotes the maximum variance and is always bounded due to the finite second moments assumption. As the variance converges to $0$, by Chebychev inequality we have
\begin{align*}
\mathbf{A}_{t}(i,:)\mathbf{W}(:,k) \stackrel{n \rightarrow \infty}{\rightarrow} E(\mathbf{A}_{t}(i,j)|\mathbf{Y}_i=y, Y_j=k).
\end{align*}
Therefore, the un-normalized embedding of vertex $i$ converges to $a_{t}(i,k)$ for each dimension $k$, and the main theorem is proved.
\end{proof}

\begin{theorem}
Assuming non-zero conditional expectations, the dynamic statistic of vertex $i$ at time $t$ converges to $0$ asymptotically if and only if $\theta(a_{t}(i,:), a_{1}(i,:))=0$, and it converges to $1$ asymptotically if and only if $\theta(a_{t}(i,:), a_{1}(i,:))= \pi/2$, where $\theta(\cdot,\cdot)$ denotes the angle between two vectors.
\end{theorem}

\begin{proof}
The vertex dynamic statictic is defined by
\begin{align*}
\mbox{Dynamic}_{1,t}(i)=1-<\mathbf{\tilde{Z}}_{t}(i, \cdot), \mathbf{\tilde{Z}}_{1}(i, \cdot)>.
\end{align*}
By Theorem 1, we have
\begin{align*}
\mbox{Dynamic}_{1,t}(i) &\stackrel{n \rightarrow \infty}{\rightarrow} 1-<\frac{a_{t}(i,k)}{\|a_{t}(i,:)\|_{2}}, \frac{a_{1}(i,k)}{\|a_{1}(i,:)\|_{2}}> \\
& = 1- \cos\theta(a_{t}(i,:), a_{1}(i,:)).
\end{align*}
The results immediately follow.
\end{proof}

\section{Simulation Details}
The main paper's synthetic study is based on the degree-corrected stochastic block model. In the standard stochastic block model, each vertex $i$ is assigned a community label $Y_i \in \{1,\ldots, K\}$, which can be pre-determined or generated by a categorical distribution with prior probability $\{\pi_k \in (0,1) \mbox{ with } \sum_{k=1}^{K} \pi_k=1\}$. The edge probability between a vertex from community $k$ and a vertex from community $l$ is defined by a block probability matrix $\mathbf{B}=[\mathbf{B}(k,l)] \in [0,1]^{K \times K}$, and for any $i<j$ it holds that
\begin{align*}
\mathbf{A}(i,j) &\stackrel{i.i.d.}{\sim} \operatorname{Bernoulli}(\mathbf{B}(Y_i, Y_j)), \\
\mathbf{A}(i,i)&=0, \ \ \mathbf{A}(j,i) = \mathbf{A}(i,j).
\end{align*}
The degree-corrected stochastic block model is a generalization of SBM that accounts for the sparsity of real graphs. In addition to the parameters defined in SBM, each vertex $i$ is given a degree parameter $\theta_i$, and the adjacency matrix is generated by 
\begin{align*}
\mathbf{A}(i,j) \sim \operatorname{Bernoulli}(\theta_i \theta_j \mathbf{B}(Y_i, Y_j)).
\end{align*}
The degree parameters are usually constrained to ensure a valid probability. 

The standard stochastic block model generates dense graphs where all vertices within the same community have the same expected degrees. However, many real-world graphs exhibit sparsity and varying degrees among their vertices. Therefore, these degree parameters allow DC-SBM to better model the sparsity of each vertex and provide an accurate approximation for many real-world graphs, making it an ideal model for simulations. 

\subsection{Stable Network}
The synthetic study generated the graph at time $t=1$ using the degree-corrected stochastic block model. The model parameters were set as follows: $n=30000$, $K=20$, $Y_i=1,2,\ldots,20$ equally likely, and the block probability matrix satisfies: $\mathbf{B}(i,i)=0.5$ and $\mathbf{B}(i,j)=0.1$ for all $i=1,\ldots, 20$ and $j \neq i$. The degree parameter was independently generated by $Beta(1,4)$ for each vertex. 

Given the presence of an edge, the edge weights were randomized within the range $[1,...100]$. In other words, the adjacency matrix was generated as follows:
\begin{align*}
\mathbf{A}(i,j) \sim U_{ij} \cdot \operatorname{Bernoulli}(\theta_i \theta_j \mathbf{B}(Y_i, Y_j)),
\end{align*}
where $U_{ij}$ is equally likely to be any integer within $[1,100]$, and it is independently generated for different pairs of $(i,j)$.

For each subsequent time $t$, the graph at time $t$ was obtained by modifying the graph at time $t-1$ where $50\%$ of the edge weights were randomly changed. To introduce noise, a random number uniformly drawn from $[-20, +20]$ was added to the weight of each edge, and then the weights were enforced to be non-negative. In mathematical notation:
\begin{align*}
\mathbf{A}_{t}(i,j)=\max\{\mathbf{A}_{t}(i,j) + \epsilon_{ij} * Bernoulli(0.5),0\},
\end{align*}
where $\epsilon_{ij}$ is equally likely to be any integer within $[-20, 20]$, and it is independently generated for different pairs of $(i,j)$. This procedure resulted in a relatively stable time-series graph where the connectivity remained the same, but the edge weights gradually change over time.

\subsection{Outlier Vertices}
This simulation employed the same model settings as described above, with the exception of $n=1000$ vertices and $T=10$. At $t=10$, we introduced $10$ extreme outlier vertices, each with one or two incident edge weights randomly assigned within the range of $[500,1000]$, rendering them significantly different from all other edge weights.

For the unfolded spectral embedding, we experimented with all possible values of $d$ ranging from $1$ to $30$, and the results were consistently similar to those obtained with $d=10$.

\subsection{Pattern Shift}
In this simulation, we considered a binary graph with $n=30000$ vertices, $K=3$, equally likely labels $Y_i=1,2,3$, and independently generated degree parameters following a $Beta(1,4)$ distribution for each vertex. 

At $t=1$, we generated the initial graph using the block probability matrix $\mathbf{B}_1 \in \mathbb{R}^{3 \times 3}$ with $\mathbf{B}_1(i,i)=0.9$ and $\mathbf{B}_1(i,j)=0.1$ for all $i=1,\ldots, 3$ and $j \neq i$.

At $t=2$, the block probability is changed for community $3$. We first set $\mathbf{B}_2=\mathbf{B}_1$ and then modified $\mathbf{B}_{2}(:,3)=0.3$, meaning that vertices in community $3$ had a $30\%$ chance of having an edge with any other vertex, up to the degree parameter.

At $t=3$, the label vector is changed: each vertex in community $3$ was randomly reassigned to a new community $4$ with a $50\%$ probability. The block probability matrix $\mathbf{B}_3 \in \mathbb{R}^{4 \times 4}$ was set as follows: $\mathbf{B}_3(i,i)=0.9$ and $\mathbf{B}_3(i,j)=0.1$ for all $i=1,\ldots, 4$ and $j \neq i$. Then, we set $\mathbf{B}_{3}(:,3)=0.3$. Essentially, the edge probability for communities $1$, $2$, and $3$ remained the same as at $t=2$, while community $4$ reverted to the edge probability of community $3$ at $t=1$.

At $t=4$, two communities are merged: all of community $3$ became community $1$. Vertices in community $3$ follows the same edge probability as vertices in community $1$.
 
The temporal encoder embedding was applied to all four graphs, with the initial label vector used as input, generating an $n \times 3 \times 4$ embedding. Note that the modified label vector was not used in the embedding process.
\end{document}